\def\markboth#1#2{\def\leftmark{\@IEEEcompsoconly{\sffamily}\MakeUppercase{\protect#1}}%
\def\rightmark{\@IEEEcompsoconly{\sffamily}\MakeUppercase{\protect#2}}}
\DeclarePairedDelimiter{\ceil}{\lceil}{\rceil}
\DeclarePairedDelimiter{\floor}{\lfloor}{\rfloor}
\newcolumntype{L}[1]{>{\raggedright\let\newline\\\arraybackslash\hspace{0pt}}m{#1}}
\newcolumntype{C}[1]{>{\centering\let\newline\\\arraybackslash\hspace{0pt}}m{#1}}
\newcolumntype{R}[1]{>{\raggedleft\let\newline\\\arraybackslash\hspace{0pt}}m{#1}}
\theoremstyle{definition}
\newtheorem{definition}{Definition}
\theoremstyle{plain}
\newtheorem{theorem}{Theorem}
\theoremstyle{remark} 
\newtheorem{remark}{Remark}
\newtheorem{example}{Example}
\newcommand\xqed[1]{%
\leavevmode\unskip\penalty9999 \hbox{}\nobreak\hfill
\quad\hbox{#1}}
\newcommand\demo{\xqed{$\triangle$}}
\newcommand{\define}{\triangleq}
\newcommand{\D}{\mathrm{d}}
\newcommand{\CNnum}{m}
\newcommand{\VNs}{\glspl{vn}\xspace}
\newcommand{\CNs}{\glspl{cn}\xspace}
\newcommand{\nbch}{\ensuremath{n}}
\newcommand{\Indicator}[1]{\mathbbm{1}\left\{#1\right\}}
\newcommand{\tee}{\ensuremath{t}}
\newcommand{\tmax}{\ensuremath{\tee_{\text{max}}}}
\newcommand{\transpose}{\intercal}
\newcommand{\Loss}{\ensuremath{\mathcal{L}}}
\newcommand{\PCs}{\glspl{pc}\xspace}
\newcommand{\GPCs}{\glspl{gpc}\xspace}
\newcommand{\cp}{\ensuremath{\bar{c}_\textup{p}}}
\newcommand{\cthr}{\ensuremath{\bar{c}}}
\newcommand{\vect}[1]{\ensuremath{\boldsymbol{#1}}}
\newcommand{\mat}[1]{\ensuremath{\boldsymbol{#1}}}
\newcommand{\etab}{\ensuremath{\boldsymbol{\eta}}}
\newif\ifshow
\newcommand{\abbr}[1]{{#1}}				
\let\aclOLD=\acl
\renewcommand{\acl}[1]{%
  \begingroup    
  \let\@@underline=\relax
  \aclOLD{#1}%
  \endgroup
}
\newcommand{\NewA}[3]{
	\newacronym{#1}{#2}{#3}
}
\newacronym[%
	longplural={binary erasure channels},%
	shortplural={BECs}%
]{bec}{BEC}{binary erasure channel}%
\begin{document}

\title{Deterministic and Ensemble-Based
Spatially-Coupled Product Codes}


\author{
	\IEEEauthorblockN{
	Christian Häger\IEEEauthorrefmark{2},
	Henry D.~Pfister\IEEEauthorrefmark{3},
	Alexandre Graell i Amat\IEEEauthorrefmark{2}, and
	Fredrik Brännström\IEEEauthorrefmark{2} 
	\thanks{This work was partially funded by the Swedish Research
	Council under grant \#2011-5961. The work of H.~Pfister was
	supported in part by the National Science Foundation (NSF) under
	Grant No.~1320924. Any opinions, findings, conclusions, and
	recommendations expressed in this material are those of the authors
	and do not necessarily reflect the views of the NSF.  }}

	\IEEEauthorblockA{\IEEEauthorrefmark{2}%
	Department of Signals and Systems,
	Chalmers University of Technology,
	Gothenburg, Sweden}
	\IEEEauthorblockA{\IEEEauthorrefmark{3}%
	Department of Electrical and Computer Engineering, Duke University,
	Durham, North Carolina
	}
}

\maketitle


\begin{abstract}
Several authors have proposed spatially-coupled (or
convolutional-like) variants of \PCs. In this paper, we focus on a
parametrized family of generalized \PCs that recovers some of these
codes (e.g., staircase and block-wise braided codes) as special cases
and study the iterative decoding performance over the \acl{bec}. Even
though our code construction is deterministic (and not based on a
randomized ensemble), we show that it is still possible to rigorously
derive the \gls{de} equations that govern the asymptotic performance.
The obtained \gls{de} equations are then compared to those for a
related spatially-coupled \gls{pc} ensemble. In particular, we show
that there exists a family of (deterministic) braided codes that
follows the same DE equation as the ensemble, for any spatial length
and coupling width. 

\end{abstract}

\glsresetall

\section{Introduction}

Several authors have proposed modifications of the classical \gls{pc}
construction by Elias \cite{Elias1954}, typically by considering
nonrectangular code arrays. These modifications can be regarded as
generalized \gls{ldpc} codes \cite{Tanner1981}, where the underlying
Tanner graph consists exclusively of degree-2 \VNs. We refer to such
codes as generalized PCs (GPCs)\glsunset{gpc}. For example, \GPCs have
been investigated by many authors as practical solutions for
high-speed fiber-optical communications \cite{Justesen2011,
Smith2012a, Jian2014, Zhang2014, Haeger2015ofc}. 


For the \gls{bec}, we are interested in the asymptotic iterative
decoding performance of \GPCs whose associated code arrays have a
spatially-coupled or convolutional-like structure. Examples include
braided codes \cite{Jian2014,Feltstrom2009} and staircase codes
\cite{Smith2012a}. Spatially-coupled codes have attracted significant
attention in the literature due to their outstanding performance under
iterative decoding \cite{Kudekar2011, Yedla2014}.


 
An asymptotic analysis is typically based on \gls{de} \cite{Luby1998b,
Richardson2001} using an ensemble argument. This approach was taken
for spatially-coupled PCs in \cite{Jian2012, Zhang2015}. However, a
randomly chosen code from these ensembles is unlikely to have a
product array (row-column) structure and hence does not resemble the
codes that are implemented in practice, e.g., staircase or braided
codes. It is thus desirable to make precise statements about the
performance of sequences of deterministic (and structured) \GPCs.  



We consider the high-rate regime, where one assumes that component
codes correct a fixed number of erasures and then studies the case
where the component code length $n$ tends to infinity. Using a
Chernoff bound, one finds that for any fixed erasure probability $p$,
the decoding will fail for large $n$ with high probability.
Therefore, it is customary to let the erasure probability decay slowly
as $c/n$ for some $c > 0$. This leads to rigorous decoding thresholds
in terms of $c$ which may be interpreted as the effective channel
quality. The high-rate regime is also the regime that is relevant in
practice: It is at high rates where \GPCs are competitive compared to
\gls{ldpc} codes and practical \GPCs typically employ long component
codes with small erasure-correcting capability \cite{Justesen2011,
Jian2014, Smith2012a}. 

The main contribution of this paper is to show that, analogous to
\gls{de} for code ensembles, there exists a large class of
deterministic \glspl{gpc} whose asymptotic performance in the
high-rate regime is rigorously characterized in terms of a recursive
DE equation. The code construction we propose here is sufficiently
general to recover (block-wise) braided and staircase codes as special
cases. Our result generalizes previous work in \cite{Justesen2011}
from conventional PCs to a large class of deterministic \GPCs. We
further provide a detailed comparison between deterministic
spatially-coupled PCs and the ensembles in \cite{Jian2012, Zhang2015}
via their respective DE equations. For example, we show that there
exists a family of block-wise braided codes that follows the same DE
recursion as the ensemble in \cite{Jian2012}. This implies that
certain ensemble-properties proved in \cite{Jian2012} also apply to
deterministic \GPCs.  

\emph{Notation.} We use boldface letters to denote vectors and
matrices (e.g., $\vect{x}$ and $\mat{A}$). The symbols $\vect{0}_m$
and $\vect{1}_m$ denote the all-zero and all-one vectors of length
$m$, where the subscript may be omitted. The tail-probability of a
Poisson random variable is defined as $\Psi_{\geq \tee}(x) \define 1 -
\sum_{i=0}^{\tee-1} \Psi_{= i}(x)$, where $\Psi_{= i}(x) \define
\frac{x^i}{i!} e^{-x}$. We use boldface to denote the element-wise
application of a scalar-valued function to a vector, e.g.,
$\boldsymbol{\Psi}_{\geq \tee}(\vect{x})$ applies $\Psi_{\geq
t}(\cdot)$ to each element in $\vect{x}$. For vectors, we use
$\vect{x} \succeq \vect{y}$ if $x_i \geq y_i$ for all $i$. We define
$[m] \define \{1, 2, \dots, m\}$. The indicator function is denoted by
$\Indicator{\cdot}$.

\section{Code Construction and Density Evolution for Deterministic
Generalized Product Codes}

We denote a \gls{gpc} by $\mathcal{C}_n({\etab})$, where $n$ is
proportional to the number of \CNs in the underlying Tanner graph and
$\etab$ is a binary, symmetric $L \times L$ matrix that defines the
graph connectivity.  Recall that \GPCs also have a natural array
representation: The code $\mathcal{C}_n({\etab})$ can alternatively be
defined as the set of all code arrays of a given shape (see
Fig.~\ref{fig:code_arrays} for examples) such that each row and column is a
codeword in some component code.  Thus, one may alternatively think
about $\etab$ as specifying the array shape.  We will see in the
following that different choices for $\etab$ recover well-known code
classes. 

\subsection{Code Construction}
\label{sec:construction}

Let $\gamma > 0$ be some fixed and arbitrary constant such that $d
\define \gamma n$ is an integer. To construct the Tanner graph that
defines $\mathcal{C}_n(\etab)$, assume that there are $L$ positions.
Then, place $d$ \CNs at each position and connect each \gls{cn} at
position $i$ to each \gls{cn} at position $j$ through a \gls{vn} if
and only if $\eta_{i,j} = 1$. 

\begin{example} A \gls{pc} is obtained by choosing $L = 2$ and $\etab
	= \left(\begin{smallmatrix} 0 & 1\\ 1 & 0
\end{smallmatrix}\right)$. The two positions correspond to ``row''
and ``column'' codes. For $\gamma = 1$, the code array
is of size $n \times n$.
\demo
\end{example}

For a fixed $n$, the constant $\gamma$ scales the number of CNs in the
graph. This is inconsequential for the asymptotic analysis (where $n
\to \infty)$ and $\gamma$ manifests itself in the DE equations merely
as a scaling parameter. Its choice will become clear once we discuss
the comparison of codes defined by different $\etab$-matrices in
Sec.~\ref{sec:sc_pcs}. 

\CNs at position $i$ have degree $d \sum_{j \neq i} \eta_{i,j} +
\eta_{i,i}(d - 1)$, where the second term arises from the fact that we
cannot connect a \gls{cn} to itself if $\eta_{i,i} = 1$.  The \gls{cn}
degree specifies the length of the component code associated with the
\gls{cn}. We assume that each \gls{cn} corresponds to a $\tee$-erasure
correcting component code. This assumption is relaxed in
Sec.~\ref{sec:potential_threshold_optimization}.

\subsection{Iterative Decoding}

Suppose that a codeword of $\mathcal{C}_n(\etab)$ is transmitted over
the \gls{bec}\footnote{In practice, \GPCs are used to correct errors
and not erasures. However, the (rigorous) analysis over the \gls{bec}
can be used to closely approximate the performance also over the
binary symmetric channel, see, e.g., \cite{Justesen2011}.} with
erasure probability $p=c/n$ for $c > 0$. The decoding is performed
iteratively assuming $\ell$ iterations of bounded-distance decoding
for the component codes associated with all CNs. Thus, in each
iteration, if the weight of an erasure pattern for a CN is less than
or equal to $\tee$, the pattern is corrected. If the weight exceeds
$\tee$, we say that the component code declares a decoding failure in
that iteration. 

\subsection{Density Evolution}
\label{sec:de}

We wish to characterize the decoding performance in the limit $n \to
\infty$. To that end, assume that we compute 
\begin{align}
	\label{eq:de}
	\!\!\!\!\vect{z}^{(\ell)} = \vect{\Psi}_{\geq \tee+1} (c \mat{B}
	\vect{x}^{(\ell-1)} ) ,  \text{with } 
	\vect{x}^{(\ell)} = \vect{\Psi}_{\geq \tee} (c \mat{B} 
	\vect{x}^{(\ell-1)} ) , 
\end{align}
where $\vect{x}^{(0)} = \vect{1}_L$ and $\mat{B} \define \gamma
\etab$. The main result is as follows. 

\begin{theorem}
	\label{th:main}
	Let the random variable $W$ be the fraction of component codes that
	declare decoding failures in iteration $\ell$. Then, $W$ converges
	almost surely to $\frac{1}{L} \sum_{i=1}^L z_i^{(\ell)}$ as $n \to
	\infty$. 
\end{theorem}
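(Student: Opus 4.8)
The plan is to combine a local tree analysis that identifies \eqref{eq:de} as the density-evolution recursion with a concentration argument that upgrades convergence of the mean of $W$ to almost-sure convergence. The first observation is that only channel-erased VNs matter for the decoder: a VN that is not erased carries a known value and never sends an erasure message, so the decoder dynamics are governed entirely by the subgraph $G_e$ induced by the erased VNs. Since each CN at position $i$ is incident to (essentially) $\gamma n \eta_{i,j}$ VNs toward position $j$, each erased independently with probability $c/n$, the number of erased edges from a given CN toward position $j$ is $\Binom{\gamma n \eta_{i,j}}{c/n}$, which converges to $\Pois{c\gamma\eta_{i,j}}$. Thus $G_e$ is a sparse inhomogeneous random graph of bounded average degree. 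First I would show that, for fixed $\ell$, the depth-$2\ell$ neighborhood of a uniformly chosen CN in $G_e$ converges in the local-weak sense to a multi-type \GalWat tree in which a position-$i$ CN has $\Pois{c\gamma\eta_{i,j}}$ erased children leading to position-$j$ CNs, and that this neighborhood is a tree with probability $1-\bigo(1/n)$ (no short cycles among erased edges).

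On the limiting tree the decoder is exact, and I would establish \eqref{eq:de} by induction on $\ell$ using two coupled quantities. Let $x_i^{(\ell)}$ denote the probability that the extrinsic message sent from a position-$i$ CN toward its parent along an erased edge is still an erasure after $\ell$ iterations, with $x_i^{(0)}=1$. Thinning the $\Pois{c\gamma\eta_{i,j}}$ erased children by the probability $x_j^{(\ell-1)}$ that a child remains unresolved, and summing the independent Poisson contributions over $j$, the number of still-erased \emph{other} incident edges is $\Pois{c(\mat{B}\vect{x}^{(\ell-1)})_i}$. The CN fails to resolve the parent edge under bounded-distance decoding iff this number is at least $\tee$, giving $x_i^{(\ell)}=\Psi_{\geq \tee}(c(\mat{B}\vect{x}^{(\ell-1)})_i)$, i.e.\ the second recursion in \eqref{eq:de}. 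The root CN, which has no parent, declares a failure iff its total number of unresolved incident erasures exceeds $\tee$, i.e.\ is at least $\tee+1$, which occurs with probability $z_i^{(\ell)}=\Psi_{\geq \tee+1}(c(\mat{B}\vect{x}^{(\ell-1)})_i)$; the distinction between $\geq\tee$ (extrinsic, parent excluded) and $\geq\tee+1$ (intrinsic failure) is exactly what separates the two equations. Since the iteration-$\ell$ failure indicator of a CN depends only on its depth-$2\ell$ neighborhood, local-weak convergence gives that a position-$i$ CN fails with probability tending to $z_i^{(\ell)}$, and averaging over the $L$ equally sized positions yields $\E{W}\to\frac{1}{L}\sum_{i=1}^L z_i^{(\ell)}$.

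It remains to show that $W$ concentrates around its mean. Writing $W=(L\gamma n)^{-1}\sum_a \Indicator{a\text{ fails}}$, the indicators form a field of bounded random variables whose dependency is controlled by $G_e$: two indicators are dependent only if the corresponding CNs lie within distance $4\ell$ in $G_e$. Since the erased-edge degrees are Poisson with bounded mean, the number of CNs within distance $4\ell$ of a fixed CN is $\bigo(1)$ in expectation (and $\bigo(\mathrm{polylog}\,n)$ uniformly, with high probability), so the dependency graph has small maximum degree. This gives $\Var{W}=\bigo(1/n)$ and, more usefully, a concentration inequality for sums of locally dependent bounded variables yields $\Pr{|W-\E{W}|>\varepsilon}\le 2\exp(-\Omega(n\varepsilon^2/\mathrm{polylog}\,n))$. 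These tail bounds are summable in $n$, so the Borel--Cantelli lemma upgrades convergence in probability to $W \asto \frac{1}{L}\sum_{i=1}^L z_i^{(\ell)}$.

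The main obstacle is the concentration step rather than the recursion. Because each CN has degree $\Theta(n)$, there are $\Theta(n^2)$ independent channel variables, and a single erasure can in principle trigger a cascade along $G_e$; consequently a naive bounded-differences (McDiarmid/Azuma) bound over the channel variables does not decay, since there $\sum_v c_v^2=\Omega(1)$. The resolution is precisely that erasedness propagates only through the sparse graph $G_e$, so the failure indicators inherit a bounded-degree dependency structure; the care needed is (i) to bound uniformly the sizes of the depth-$4\ell$ neighborhoods in $G_e$, truncating the rare high-degree CNs, and (ii) to control the $\bigo(1/n)$ probability that a neighborhood fails to be tree-like, which enters only as a vanishing correction to $\E{W}$ and therefore does not affect the limit.
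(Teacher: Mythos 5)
Your identification of the limit and your derivation of the recursion follow essentially the same route as the paper: the subgraph induced by the erased VNs (the residual graph) is an inhomogeneous random graph in the sense of \cite{Bollobas2007}, its local limit is a multi-type Poisson branching process, and the extrinsic/intrinsic distinction ($\Psi_{\geq \tee}$ when the parent edge is excluded, $\Psi_{\geq \tee+1}$ for a declared failure) correctly reproduces both equations in \eqref{eq:de}, giving $\E{W} \to \frac{1}{L}\sum_{i=1}^{L} z_i^{(\ell)}$ exactly as in \cite{Haeger2015tit}.

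The genuine gap is in the concentration step. The assertion that ``two indicators are dependent only if the corresponding CNs lie within distance $4\ell$ in $G_e$'' is not a statement to which any off-the-shelf inequality for sums of locally dependent variables applies: such inequalities (Janson-type) require a \emph{deterministic} dependency graph with the property that families of variables indexed by non-adjacent sets are genuinely independent. Here the proximity relation is itself random---it is determined by the very channel variables generating the randomness---and, viewed as functions of the independent erasures, the failure indicators have essentially global support: any single erasure can, for some configuration of the others, influence a given indicator, which is precisely the cascade phenomenon you point out when dismissing McDiarmid. The only deterministic dependency graph available (through the Tanner graph) is essentially complete, since the depth-$2\ell$ Tanner-graph neighborhoods of any two CNs overlap. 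Truncating high-degree neighborhoods does not repair this, because conditioning on the typical event that all neighborhoods are small destroys the independence of the underlying channel variables; your fix thus re-imports the difficulty it is meant to solve. The tool designed for exactly this situation---a function of independent variables that satisfies good Lipschitz bounds only on a typical event---is the method of typical bounded differences \cite{Warnke2012}, and that is precisely what the paper invokes to obtain tail bounds summable in $n$, hence almost-sure convergence via Borel--Cantelli. (An alternative repair avoiding exponential inequalities is a fourth-moment bound $\E{(W-\E{W})^4} = \bigo(n^{-2})$, but this requires quantitative joint local-convergence estimates for quadruples of CNs, which your sketch also does not contain.) Without one of these ingredients, the passage from ``the dependence is typically local'' to the exponential tail bound, and hence to almost-sure convergence, is unjustified.
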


\begin{proof}[Proof (Outline)]
	The decoding can be represented by applying a peeling algorithm to
	the residual graph which is obtained from the Tanner graph by
	deleting known VNs and collapsing erased VNs into edges
	\cite{Justesen2011, Jian2014, Zhang2015}. Our code construction is
	such that the residual graph corresponds to an inhomogeneous random
	graph \cite{Bollobas2007}. The expected value of a suitably defined
	function applied to such a graph converges to the expected value of
	the same function applied to a multi-type Poisson branching process
	\cite{Bollobas2007}. One can show that the peeling algorithm
	constitutes a valid function and that $\frac{1}{L} \sum_{i=1}^L
	z_i^{(\ell)}$ corresponds to its expected value when applied to the
	branching process. Concentration is established by using the method
	of typical bounded differences \cite{Warnke2012}. For a complete
	proof we refer the reader to \cite{Haeger2015tit}. 
\end{proof}


Th.~\ref{th:main} is analogous to the \gls{de} analysis for \gls{ldpc}
codes \cite[Th.~2]{Richardson2001}. For notational convenience, we
define $h(x) \define \Psi_{\geq \tee}(c x)$, so that the recursion in
\eqref{eq:de} can be written as
\begin{align}
	\label{eq:de_gpc}
	\vect{x}^{(\ell)} = \vect{h} ( \mat{B} \vect{x}^{(\ell-1)}).
\end{align}
\begin{definition}
	\label{def:threshold}
The decoding threshold is defined to be 
\begin{align}
	\label{eq:gpc_threshold}
	\cthr \define \sup\{ c \geq 0 \, | \, \vect{x}^{(\infty)} =
	\vect{0}_L
	\}.
\end{align}
\end{definition}

\section{Spatially-Coupled Product Codes}

\subsection{Deterministic Spatially-Coupled Product Codes}
\label{sec:sc_pcs}

We are interested in cases where $\etab$ (and hence $\mat{B}$) has a
band-diagonal ``convolutional-like'' structure. The associated code
can then be classified as a spatially-coupled \gls{pc}.

\begin{figure}[t]
	\vspace{-0.4cm}
	\centering
	\subfloat[staircase code]{\includegraphics{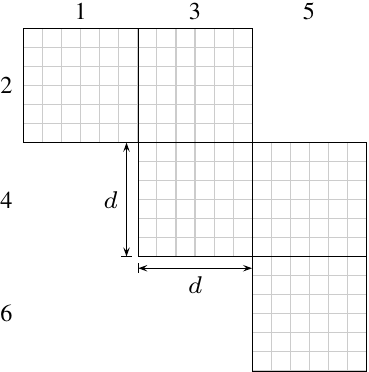}}
	\qquad
	\subfloat[block-wise braided code]{\includegraphics{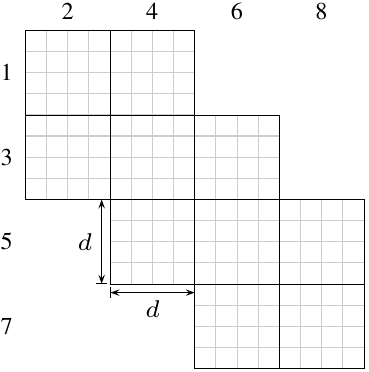}}
	\caption{Code arrays for $\mathcal{C}_{12}(\etab)$, where in (a)
	$\gamma = 1/2$ and in (b) $\gamma = 1/3$. Numbers indicate the position
	indices in the code construction. }
	\vspace{-0.3cm}
	\label{fig:code_arrays}
\end{figure}

\begin{example}
	\label{ex:staircase_codes}
	For $L \geq 2$, the matrix $\etab$ describing a staircase code
	\cite{Smith2012a} has entries $\eta_{i, i+1} = \eta_{i+1,i} = 1$ for
	$i \in [L-1]$ and zeros elsewhere. The corresponding code array is
	shown in Fig.~\ref{fig:code_arrays}(a), where $L = 6$, $n = 12$, and
	$\gamma = 1/2$. \demo
\end{example}

\begin{example}
	\label{ex:braided_codes}
	For even $L\geq 4$, let $\eta_{i,i+1} = \eta_{i+1, i} = 1$ for $i
	\in [L-1]$, $\eta_{2i-1, 2i+2} = \eta_{2i+2, 2i-1} = 1$ for $i \in
	[L/2-1]$, and zeros elsewhere. The resulting matrix $\etab$
	describes a particular instance of a block-wise braided
	code\footnote{We are somewhat liberal in our interpretation of the
	definition in \cite{Feltstrom2009} which is based on multiple block
	permutators. In \cite{Feltstrom2009}, these permutators are linked
	to the dimension of the component code, which turns out to be
	unnecessarily narrow for our purposes.} \cite{Feltstrom2009}.  The
	code array is shown in Fig.~\ref{fig:code_arrays}(b), where $L = 8$,
	$n = 12$, and $\gamma = 1/3$. \demo
\end{example}

The threshold $\cthr$ in Def.~\ref{def:threshold} is a function of
$\etab$ and the scaling parameter $\gamma$. A reasonable scaling to
compare different spatially-coupled PCs is to choose $\gamma$ such
that $\lim_{L \to \infty} \frac{1}{L} \sum_{i=1}^L \sum_{j=1}^L
B_{i,j} = 1 $. For example, $\gamma = 1/2$ and $\gamma = 1/3$ for
staircase and the above braided codes, respectively. This ensures that
in both cases the component codes have length $n$, except at the array
boundaries, see Fig.~\ref{fig:code_arrays}. The matrix $\mat{B}$ is
then referred to as an averaging matrix. 

\subsection{Spatially-Coupled Product Code Ensembles}

We wish to compare the obtained DE recursion in \eqref{eq:de_gpc} to
the \gls{de} recursion for the spatially-coupled PC ensemble defined
in \cite{Jian2012}. We review the necessary background in this
section. 

Let $\mathcal{B}$ be a $\tee$-erasure correcting component code of
length $n$. The Tanner graph corresponding to one particular code in
the spatially-coupled $(\mathcal{B}, \CNnum, L, w)$ ensemble, where
$L$ and $w$ are referred to as the spatial length and coupling width,
respectively, is constructed as follows (cf.~\cite[Def.~2]{Jian2012}).
Place $\CNnum$ degree-$\nbch$ \CNs corresponding to $\mathcal{B}$ at
each position $i \in [L]$ and place $\CNnum\nbch /2$ degree-2 \VNs at
each position $i \in [L']$, where $L' \define L-w+1$.  The
$\CNnum\nbch$ VN and CN sockets at each position are partitioned into
$w$ groups of $\CNnum\nbch/w$ sockets via a uniform random
permutation. Let $\mathcal{S}_{i,j}^{(v)}$ and
$\mathcal{S}_{i,j}^{(c)}$ be, respectively, the $j$-th group for the
VNs and CNs at position $i$, where $j \in [w]$. The Tanner graph is
constructed by connecting $\mathcal{S}_{i,j}^{(v)}$ to
$\mathcal{S}_{i+j,w-j+1}^{(c)}$. 

The ensemble-averaged performance for $\CNnum \to \infty$ is studied
in \cite{Jian2012}.  Without going into the details, the obtained DE
recursion in the high-rate regime (where, additionally, $\nbch \to
\infty$ and $p = c/\nbch$) is given by \cite[eq.~(9)]{Jian2012}
\begin{align}
	\label{eq:de_sc_ensemble}
	\vect{\tilde{x}}^{(\ell)} = c \mat{A} \mat{\Psi}_{\geq \tee}
	(\mat{A}^\transpose \vect{\tilde{x}}^{(\ell-1)}),
\end{align}
where $\vect{\tilde{x}}^{(0)} = c\vect{1}_{L'}$ and $\mat{A}$ is
an $L' \times L$ matrix with
entries 
\begin{align}
	\label{eq:sc_matrix}
	\!\!\!\!A_{i,j} = w^{-1} \Indicator{1 \leq j-i+1 \leq w}, \text{ for } i
	\in [L'], j \in [L]. 
\end{align}

\begin{remark}
	In \cite{Zhang2015}, a modified spatially-coupled PC ensemble is
	considered. The obtained DE recursion is \cite[eq.~(4), $v =
	2$]{Zhang2015}
	\begin{align}
		\vect{y}^{(\ell)} = c \mat{A}^\transpose \mat{A} \mat{\Psi}_{\geq \tee}
		( \vect{y}^{(\ell-1)}),
	\end{align}
	which is identical to
	\eqref{eq:de_sc_ensemble}
	choosing $\vect{\tilde{x}}^{(\ell)} = c \mat{A} \mat{\Psi}_{\geq
	\tee} (\vect{y}^{(\ell)})$. 
\end{remark}

Observe that \eqref{eq:de_sc_ensemble} exhibits a double averaging due
to the randomized edge connections for both VNs and CNs at each
position. Using the substitution $\vect{x}^{(\ell)} = \mat{\Psi}_{\geq
\tee} ( \mat{A} \vect{\tilde{x}}^{(\ell-1)})$ with
$\vect{\tilde{x}}^{(\ell)} = c \mat{A} \mat{\Psi}_{\geq \tee} (
\vect{x}^{(\ell)})$, the recursion becomes
\begin{align}
	\label{eq:de_sc_ensemble_mod}
	\vect{x}^{(\ell)} = \mat{\Psi}_{\geq \tee} (c  \mat{\tilde{B}}
	\vect{x}^{(\ell-1)}) = \vect{h}(\mat{\tilde{B}} \vect{x}^{(\ell-1)}),
\end{align}
where $\vect{x}^{(0)} = \vect{1}_L$ and $\mat{\tilde{B}} \define
\mat{A}^\transpose \mat{A}$ is a symmetric $L \times L$ matrix. For $L
= 6$, the  $\mat{\tilde{B}}$-matrices for $w=2$ and $w = 3$ are,
respectively, given by 
\begin{align}
	\label{eq:b_matrix}
	\!\!\!\frac{1}{4}
	\begin{pmatrix}
		1 & 1 & 0 & 0 & 0 & 0\\
		1 & 2 & 1 & 0 & 0 & 0\\
		0 & 1 & 2 & 1 & 0 & 0\\
		0 & 0 & 1 & 2 & 1 & 0\\
		0 & 0 & 0 & 1 & 2 & 1\\
		0 & 0 & 0 & 0 & 1 & 1\\
	\end{pmatrix}, 
	\quad
	\frac{1}{9}
	\begin{pmatrix}
		1 & 1 & 1 & 0 & 0 & 0\\
		1 & 2 & 2 & 1 & 0 & 0\\
		1 & 2 & 3 & 2 & 1 & 0\\
		0 & 1 & 2 & 3 & 2 & 1\\
		0 & 0 & 1 & 2 & 2 & 1\\
		0 & 0 & 0 & 1 & 1 & 1\\
	\end{pmatrix}.
\end{align}

\section{Comparison of Deterministic and Ensemble-Based Codes}

Comparing the equations, one finds that the ensemble DE recursion
\eqref{eq:de_sc_ensemble_mod} has the same form as \eqref{eq:de_gpc}.
The difference lies in the averaging due to the matrix
$\mat{\tilde{B}}$. 

\begin{example}
	It can be shown that staircase codes are contained in the ensemble
	for $\CNnum = \nbch/2$ and $w = 2$ using a proper choice of
	permutations. It is therefore tempting to conjecture that for $w =
	2$ the recursion \eqref{eq:de_sc_ensemble_mod} also applies to
	staircase codes. However, for staircase codes with $L = 6$, we have
	\begin{align}
		\label{eq:eta_staircase}
	\mat{B} = \gamma \etab = 
	\frac{1}{2}
		\begin{pmatrix}
			0 & 1 & 0 & 0 & 0 & 0\\
			1 & 0 & 1 & 0 & 0 & 0\\
			0 & 1 & 0 & 1 & 0 & 0\\
			0 & 0 & 1 & 0 & 1 & 0\\
			0 & 0 & 0 & 1 & 0 & 1\\
			0 & 0 & 0 & 0 & 1 & 0\\
		\end{pmatrix} ,
	\end{align}
	which is different from the matrix $\mat{\tilde{B}}$ for $w = 2$ in
	\eqref{eq:b_matrix}.
	\demo
\end{example}

\begin{example}
	\label{ex:braided_effective}
	For the braided codes in Ex.~\ref{ex:braided_codes}, one can
	simplify \eqref{eq:de_gpc} by exploiting the inherent symmetry in
	the code construction, which implies $x_i^{(\ell)} =
	x_{i+1}^{(\ell)}$ for odd $i$ and any $\ell$.  It is then sufficient
	to retain odd (or even) positions in \eqref{eq:de_gpc}. With this
	simplification, the effective averaging matrix\footnote{The reader
	may wonder to what code the matrix \eqref{eq:braided_simplified}
	corresponds to, i.e., the code $\mathcal{C}_n(\etab)$ that results
	from using $\etab = 3 \mat{B}'$. One can show that
	$\mathcal{C}_n(\etab)$ can be interpreted as a symmetric subcode of
	the braided code, see \cite{Haeger2015tit, Pfister2015}.} for $L = 12$
	is
\begin{align}
	\label{eq:braided_simplified}
 	\mat{B}' = \frac{1}{3}  
	\begin{pmatrix}
		1 & 1 & 0 & 0 & 0 & 0\\
		1 & 1 & 1 & 0 & 0 & 0\\
		0 & 1 & 1 & 1 & 0 & 0\\
		0 & 0 & 1 & 1 & 1 & 0\\
		0 & 0 & 0 & 1 & 1 & 1\\
		0 & 0 & 0 & 0 & 1 & 1\\
	\end{pmatrix}, 
\end{align}
where $\mat{B}'$ may be used to replace $\mat{B}$ in
\eqref{eq:de_gpc}. Again, one finds that $\mat{B}'$ is different from
the matrices $\mat{\tilde{B}}$ in \eqref{eq:b_matrix}.
\demo
\end{example}

\subsection{Ensemble Performance via Deterministic Codes}

Since $\etab$ is binary, all entries in $\mat{B}$ are either zero or
equal to $\gamma$. To construct spatially-coupled PCs that
follow the same DE recursion as the ensemble, we need to ``emulate''
different multiplicities in the matrix $\mat{B}$. This is done as
follows. 

\begin{definition}
	\label{def:deterministic_ensemble}
For given $L$ and $w$, let $\gamma = 1/w^2$ and $\mat{P} = w^2
\mat{A}^\transpose\mat{A}$, where $\mat{A}$ is defined by
\eqref{eq:sc_matrix}. We define $\etab$ as follows. First, replace
each entry $P_{i,j}$ in $\mat{P}$ by a symmetric $w \times w$ matrix
with $P_{i,j}$ ones in each row and column. The resulting $wL \times
wL$ matrix is denoted by $\etab'$. Finally, $\etab$ is given by
\begin{equation}
\label{eq:eta_prescription}
\begin{aligned}
	\eta_{2i, 2j-1} = \eta_{i,j}', \quad
	\eta_{2i-1, 2j} = \eta_{j,i}', \quad \text{for } i,j \in [wL].
\end{aligned}
\end{equation}
\end{definition}

\begin{figure}[t]
	\centering
	\includegraphics[width=7cm]{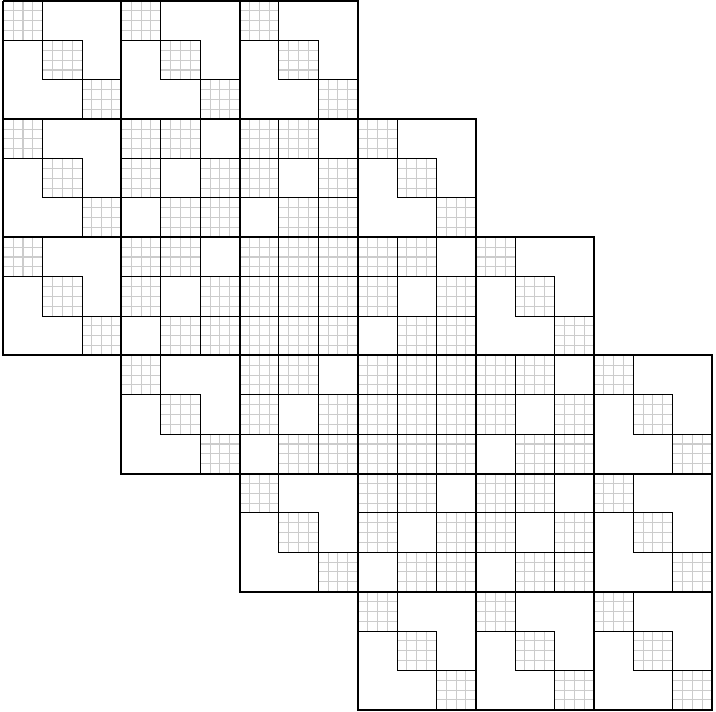}
	\vspace{-0.1cm}
	\caption{Code array corresponding to $\mathcal{C}_{24}(\etab)$ in
	Def.~\ref{def:deterministic_ensemble} with $L = 6$, $w = 3$. } 
	\vspace{-0.5cm}
	\label{fig:mod_braided_code_array}
\end{figure}

\begin{example}
	Fig.~\ref{fig:mod_braided_code_array} shows the (not necessarily
	unique) code array for $L = 6$ and $w=3$, where $\mat{A}^\transpose
	\mat{A}$ is given in \eqref{eq:b_matrix}, which can be regarded as a
	type of braided code.  \demo
\end{example}


Using the structure of $\etab$ in
Def.~\ref{def:deterministic_ensemble}, one can show that the DE
recursion for $\mathcal{C}_n(\etab)$ in \eqref{eq:de_gpc} is
equivalent to \eqref{eq:de_sc_ensemble_mod}. For example, the step in
\eqref{eq:eta_prescription} is the opposite of the simplification in
Ex.~\ref{ex:braided_effective}. The recursion defined by
\eqref{eq:de_sc_ensemble_mod} constitutes an (unconditionally stable)
scalar admissible system as defined in \cite{Yedla2014}. One may thus
use the potential function approach in \cite{Yedla2014} to calculate
decoding thresholds as follows (see also \cite{Jian2014, Zhang2015}). 

\begin{definition}
	\label{def:potential_function}
	The single system potential function is defined as $V_s(x) \define
	\frac{1}{2} x^2 - H(x)$, where $H(x) = \int_0^z h(z) \, \D z$.  In
	order to highlight the dependence of the potential function on the
	channel quality parameter $c$, we write $V_s(x; c)$. 
\end{definition}

\begin{definition}
	\label{def:potential_threshold}
	The potential threshold is defined as
	\begin{align}
		\label{eq:potential_threshold}
		\cp = \sup \{ c \geq 0 \,|\, \min_{x
		\in [0;1]} V_s(x; c) \geq 0 	\}.
	\end{align}
\end{definition}

Using \cite[Lem.~36]{Yedla2014}, we have the following theorem. 

\begin{theorem}
	\label{th:ensemble_performance}
	Let $\etab$ and $\gamma$ be as in
	Def.~\ref{def:deterministic_ensemble}. For any $c < \cp$, there
	exists $w_0 < \infty$ such that for all $w \geq w_0$ and all $L$,
	the DE recursion \eqref{eq:de_gpc} for $\mathcal{C}_n(\etab)$
	converges to the $\vect{0}$ vector. 
\end{theorem}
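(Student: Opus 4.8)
The plan is to reduce the claim to a direct application of the potential-function machinery of Yedla \etal, specifically \cite[Lem.~36]{Yedla2014}, which already establishes convergence to $\vect{0}$ below the potential threshold for any scalar admissible spatially-coupled system with sufficiently large coupling width. The only work that is genuinely specific to our construction is to verify that the DE recursion \eqref{eq:de_gpc} for $\mathcal{C}_n(\etab)$, with $\etab$ and $\gamma$ chosen as in Def.~\ref{def:deterministic_ensemble}, is \emph{exactly} the coupled recursion \eqref{eq:de_sc_ensemble_mod} governed by $\mat{\tilde{B}} = \mat{A}^\transpose \mat{A}$. Once this identification is in place, the theorem follows from the cited lemma with $\cp$ as in Def.~\ref{def:potential_threshold}.

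First I would establish the equivalence of the two recursions. The matrix $\etab$ in Def.~\ref{def:deterministic_ensemble} carries a two-level symmetry that I would exploit to collapse the $2wL$-dimensional recursion \eqref{eq:de_gpc} onto an $L$-dimensional one. At the outer level, the prescription \eqref{eq:eta_prescription} gives $\etab$ a bipartite (odd/even) block structure, so that the even- and odd-indexed components of $\vect{x}^{(\ell)}$ may be identified; this is precisely the inverse of the reduction carried out in Ex.~\ref{ex:braided_effective}. At the inner level, the replacement of each integer entry $P_{i,j}$ by a $w \times w$ $0/1$ matrix whose rows and columns each sum to $P_{i,j}$ ensures that, within each block of $w$ positions, every constraint node sees the same aggregate input; hence all $w$ components of a block share a common value under the recursion. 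Identifying these components reduces the averaging performed by $\gamma \etab = w^{-2}\etab$ to multiplication by $w^{-2}\mat{P} = \mat{A}^\transpose\mat{A} = \mat{\tilde{B}}$, so that \eqref{eq:de_gpc} becomes $\vect{x}^{(\ell)} = \vect{h}(\mat{\tilde{B}}\vect{x}^{(\ell-1)})$, which is \eqref{eq:de_sc_ensemble_mod}. The care needed here is bookkeeping: one must check that the initialization $\vect{x}^{(0)} = \vect{1}$ respects both symmetries (it does, being constant) and that the row/column-regular block expansion preserves the value-equality inductively in $\ell$.

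With the equivalence established, I would verify the hypotheses of \cite[Lem.~36]{Yedla2014}. The scalar recursion $x = h(x)$ with $h(x) = \Psi_{\geq \tee}(cx)$ is nondecreasing and continuous on $[0,1]$ with $h(0)=0$, the antiderivative $H$ of Def.~\ref{def:potential_function} is well defined, and the coupling matrix $\mat{A}$ of \eqref{eq:sc_matrix} is the standard uniform coupling of width $w$; together these make \eqref{eq:de_sc_ensemble_mod} a scalar admissible system in the sense of \cite{Yedla2014}, whose unconditional stability is already noted in the remark preceding the theorem. The potential threshold $\cp$ of Def.~\ref{def:potential_threshold} coincides with the quantity appearing in the lemma. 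Invoking \cite[Lem.~36]{Yedla2014} then yields, for every $c < \cp$, a finite $w_0$ such that for all $w \geq w_0$ and all $L$ the coupled recursion converges to $\vect{0}_L$; transporting this back through the symmetry identification gives convergence of \eqref{eq:de_gpc} to $\vect{0}_{2wL}$, as claimed.

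The step I expect to be the main obstacle is the inner-level symmetry reduction. Establishing the bipartite collapse is routine, but proving that all $w$ positions inside a block remain value-equal for every iteration requires using the row- \emph{and} column-regularity of the $w\times w$ expansion blocks simultaneously, together with the symmetry of $\mat{A}^\transpose\mat{A}$; one must confirm that the nonuniqueness of the expansion (different admissible $0/1$ blocks) does not affect the aggregated input to any constraint node, so that the reduced recursion is independent of the particular $\etab$ produced by Def.~\ref{def:deterministic_ensemble}.
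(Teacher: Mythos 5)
Your proposal takes essentially the same route as the paper: reduce the recursion \eqref{eq:de_gpc} for the $\etab$ of Def.~\ref{def:deterministic_ensemble} to the ensemble recursion \eqref{eq:de_sc_ensemble_mod} governed by $\mat{\tilde{B}} = \mat{A}^\transpose\mat{A}$, and then invoke \cite[Lem.~36]{Yedla2014} for $c < \cp$. The paper states the equivalence without detail (``one can show''), and your two-level symmetry reduction---the odd/even collapse inverse to Ex.~\ref{ex:braided_effective}, plus the observation that the row/column-regular $w \times w$ expansion blocks make the aggregate inputs independent of the particular $0/1$ expansion---correctly fills in exactly that step, so the argument is sound.
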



\begin{remark}
	From Th.~\ref{th:ensemble_performance}, the threshold of
	$\mathcal{C}_n(\etab)$ satisfies $\cthr \geq \cp$ for all $L$ and
	$w$ sufficiently large. One can further show that $\cthr \geq 2\tee
	- 2$ if, additionally, $\tee$ is sufficiently large. The latter
	result was proved in \cite[Lem.~8]{Jian2012} for the
	spatially-coupled ensemble.  It also applies to the deterministic
	braided codes in Def.~\ref{def:deterministic_ensemble}, since the DE
	equations are equivalent. 
\end{remark}

\subsection{Simpler Deterministic Codes}

The curious structure of the code array in
Fig.~\ref{fig:mod_braided_code_array} is due to our attempt of
``reverse-engineering'' the DE equations of the ensemble by means of
the deterministic code construction. This begs the question whether
there exist other deterministic spatially-coupled PCs that exhibit a simpler
structure but still achieve performance guarantees similar to those in
Th.~\ref{th:ensemble_performance}. The most natural candidate appears
to be the extension of the block-wise braided code in
Fig.~\ref{fig:code_arrays}(b) to larger coupling widths. 

\begin{definition}
	\label{def:braided_extended}
	For given $L$ and $w$, let $\gamma = (2w -1)^{-1}$ and let the $L
	\times L$ matrix $\etab'$ be defined by $\eta'_{i,j} =
	\Indicator{|i-j| < w }$. Finally, let $\etab$ be as in
	\eqref{eq:eta_prescription} for $i,j \in [L]$. 
\end{definition}

\begin{example}
	For $w = 2$, $\etab$ in Def.~\ref{def:braided_extended} recovers
	$\etab$ in Ex.~\ref{ex:braided_codes}. \demo
\end{example}

The resulting DE recursion for $\mathcal{C}_n(\etab)$ is neither
equivalent to the ensemble DE recursion nor to the recursion studied
in \cite{Yedla2014}. However, one can still show the following. 

\begin{theorem}
	\label{th:braided_convergence}
	Let $\etab$ and $\gamma$ be as in Def.~\ref{def:braided_extended}.
	For any $c < \cp$, there exists $w_0 < \infty$ such that for all $w
	\geq w_0$ and all $L$, the DE recursion \eqref{eq:de_gpc} for
	$\mathcal{C}_n(\etab)$ converges to the $\vect{0}$ vector. 
\end{theorem}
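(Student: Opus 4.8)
The goal is to show that the DE recursion $\vect{x}^{(\ell)} = \vect{h}(\mat{B}\vect{x}^{(\ell-1)})$ with $\mat{B} = \gamma\etab$ from Def.~\ref{def:braided_extended} converges to $\vect{0}$ for all $c < \cp$ once $w$ is large. The key difficulty is that, unlike Th.~\ref{th:ensemble_performance}, this recursion is \emph{not} literally the scalar admissible system of \cite{Yedla2014}, so \cite[Lem.~36]{Yedla2014} cannot be invoked directly. My strategy is therefore to \emph{sandwich} the braided recursion between two copies of an admissible system to which the potential-function machinery does apply.

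First I would exploit the $2\times 2$ block structure imposed by \eqref{eq:eta_prescription}: since $\etab$ couples only odd-to-even positions, the odd-indexed and even-indexed subvectors of $\vect{x}^{(\ell)}$ evolve through two interleaved applications of the band matrix $\mat{B}' = \gamma\etab'$ with $\eta'_{i,j} = \Indicator{|i-j| < w}$. Composing two half-steps, both subvectors satisfy a recursion of the form $\vect{u}^{(\ell)} = \vect{h}(\mat{B}'\vect{h}(\mat{B}'\vect{u}^{(\ell-1)}))$, governed by the symmetric matrix $\mat{B}'$ normalized so that its row sums are (essentially) $1$ away from the boundary. The point is that $\mat{B}'$ is a genuine \emph{averaging} matrix in the sense of \cite{Yedla2014}, with uniform weights $\gamma = (2w-1)^{-1}$ over a window of width $2w-1$, and $\vect{h}$ is monotone. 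So this half-width-$w$ band recursion \emph{is} of the scalar admissible type, with the \emph{same} single-system potential function $V_s(x;c)$ of Def.~\ref{def:potential_function} (the scalar fixed-point map $h(h(\cdot))$ and $h(\cdot)$ share the same nonzero fixed points, hence the same potential threshold $\cp$).

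The main technical step is to verify the three structural hypotheses of \cite[Lem.~36]{Yedla2014} for this band system: that $\vect{h}$ is nondecreasing and Lipschitz on $[0,1]^L$, that the averaging matrix $\mat{B}'$ has the required row-stochastic-up-to-normalization and smoothing properties for a coupling window of width $2w-1$, and that the system is unconditionally stable with the stated single potential $V_s$. Given these, \cite[Lem.~36]{Yedla2014} yields a finite $w_0$ such that for all $w \geq w_0$, all $L$, and all $c < \cp$, the band recursion converges to $\vect{0}$. I expect the boundary terms (where row sums dip below $1$, cf.\ Fig.~\ref{fig:code_arrays}) to be the fiddly part, but these only \emph{lower} the effective channel load at the seed and therefore help convergence, so they can be absorbed by the standard spatially-coupled seeding argument rather than obstructing it.

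Finally I would transfer the conclusion back to the full recursion by monotonicity. Because $\vect{h}$ and $\mat{B}'$ are entrywise nonnegative and monotone, and the odd/even coupling in \eqref{eq:eta_prescription} is exactly the block form of the doubled half-step, any fixed point of the full system \eqref{eq:de_gpc} projects to a fixed point of the band system on each parity class; since the latter admits only the trivial fixed point $\vect{0}$ for $c < \cp$ and $w \geq w_0$, so does the former. Initializing at $\vect{x}^{(0)} = \vect{1}_L$ and using monotone convergence of the iterated map then gives $\vect{x}^{(\infty)} = \vect{0}_L$, completing the proof. The one place to be careful is confirming that $\cp$ computed from $V_s$ for the doubled map coincides with the $\cp$ in Def.~\ref{def:potential_threshold}; this follows because a scalar value $x$ is a fixed point of $h\circ h$ with $x \in [0;1]$ iff it is a fixed point of $h$ (as $h$ is monotone and bounded), so the energy landscapes share the same sign structure on $[0;1]$.
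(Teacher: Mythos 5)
Your overall skeleton tracks the paper's proof: reduce by the odd/even symmetry of \eqref{eq:eta_prescription} to the band recursion $\vect{x}^{(\ell)} = \vect{h}(\mat{B}'\vect{x}^{(\ell-1)})$ with $\mat{B}' = \gamma\etab'$, pass to two iterations at once, use that the doubled scalar map $h\circ h$ has the same potential threshold as $h$, and invoke \cite[Lem.~36]{Yedla2014}. But there is a genuine gap at your central step. You claim that the band recursion (equivalently its doubled form $\vect{z}^{(\ell)} = \mat{B}'\vect{h}(\mat{B}'\vect{h}(\vect{z}^{(\ell-1)}))$) ``is of the scalar admissible type'' and that one only needs to ``verify the three structural hypotheses'' of \cite[Lem.~36]{Yedla2014}. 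That verification cannot succeed: Lem.~36 applies to coupled systems of the specific form $\vect{y}^{(\ell)} = \mat{A}^\transpose\vect{f}(\mat{A}\vect{g}(\vect{y}^{(\ell-1)}))$, where $\mat{A}$ is the \emph{rectangular} boxcar matrix (here of size $L \times (L+\tilde{w}-1)$ with $\tilde{w} = 2w-1$) all of whose rows are full windows. The matrix $\mat{B}'$ is square, its windows are clipped at the boundary, and it does not factor as $\mat{A}^\transpose\mat{A}$ (such a product has the tent-shaped profile seen in \eqref{eq:b_matrix}, not uniform band weights). This is precisely why the paper states that the braided recursion ``is neither equivalent to the ensemble DE recursion nor to the recursion studied in \cite{Yedla2014}'': the system lies outside that framework, and no choice of $f$, $g$ puts it inside.

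What is missing is the explicit comparison argument. One must instantiate the genuine Yedla system with window $\tilde{w} = 2w-1$ and $f = g = h$, whose state $\vect{y}^{(\ell)}$ has length $L+\tilde{w}-1$, split it into top, center, and bottom blocks, and prove by induction on $\ell$ the entrywise domination $\vect{y}^{(\ell)}_\text{c} \succeq \vect{z}^{(\ell)}$. The induction step uses nonnegativity of the top/bottom blocks together with the fact that each clipped row of $\mat{B}'$ is a sub-row of the corresponding full-window row of $\mat{A}$, giving
\begin{align*}
\mat{A}\vect{h}(\vect{y}^{(\ell-1)}) \succeq \mat{B}'\vect{h}(\vect{y}^{(\ell-1)}_\text{c}) \succeq \mat{B}'\vect{h}(\vect{z}^{(\ell-1)}),
\end{align*}
followed by one more application of $\mat{B}'\vect{h}(\cdot)$ and monotonicity. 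Then Lem.~36, applied to the Yedla system (where it is legitimate), forces $\vect{y}^{(\infty)} = \vect{0}$, and the domination transfers this to $\vect{z}^{(\infty)} = \vect{0}$ and hence $\vect{x}^{(\infty)} = \vect{0}$. Your remark that the boundary effects ``only lower the effective channel load'' and can be ``absorbed by the standard spatially-coupled seeding argument'' is the right intuition (clipping only decreases values), but it is not a proof: the seeding argument inside \cite{Yedla2014} operates on a system already in its framework, whereas here the whole system is outside it, so the domination inequality is the crux of the proof rather than a fiddly boundary detail.
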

\begin{proof}
	See the Appendix. 
\end{proof}
	
\section{Potential Threshold Optimization}
\label{sec:potential_threshold_optimization}

In this section, we consider the case where we assign different
erasure-correcting capabilities to the component codes. To that end,
let $\vect{\tau} = (\tau_1, \dots, \tau_{\tmax})^\transpose$ be a probability
vector (i.e., $\vect{1}^\transpose \vect{\tau} = 1$ and $\vect{\tau}
\succeq 0$), where $\tau_{\tee}$ denotes the fraction of CNs at each
position that can correct $\tee$ erasures and $\tmax$ is the maximum
erasure-correcting capability.  We further define the average
erasure-correcting capability as $\bar{\tee} \define \sum_{\tee =
1}^{\tmax} \tee \tau_\tee$.  The assignment can be done either
deterministically if $\tau_\tee d$ is an integer for all
$\tee$, or independently at random according to $\vect{\tau}$. In both
cases, the distribution $\vect{\tau}$ manifests itself in the DE
equation \eqref{eq:de_gpc} by changing the function $h$ defined in
Sec.~\ref{sec:de} to $h(x) = \sum_{\tee = 1}^{\tmax} \tau_\tee
\Psi_{\geq \tee}(c x)$ (see \cite{Haeger2015tit} for details). This
affects the potential function in Def.~\ref{def:potential_function}
and thus also the potential threshold in
Def.~\ref{def:potential_threshold}. In particular, both quantities now
depend on $\vect{\tau}$ and this change is reflected in our notation
by writing $V_s(x; c, \vect{\tau})$ and $\cp(\vect{\tau})$,
respectively. 

\begin{definition}
	\label{def:semi_regular}
	A distribution is said to be regular if $\tau_{\bar{\tee}} = 1$ for
	$\bar{\tee} \in \mathbb{N}$ and semi-regular if
	$\tau_{\floor{\bar{\tee}}} = 1 + \floor{\bar{\tee}} - \bar{\tee}$
	and $\tau_{\floor{\bar{\tee}}+1} = \bar{\tee} - \floor{\bar{\tee}} $
	for $\bar{\tee} \notin \mathbb{N}$.
\end{definition}

\begin{theorem}
	\label{th:regular_optimal}
	For any fixed mean erasure-correcting capability $\bar{\tee} \geq
	2$, a (semi-)regular distribution maximizes the potential threshold
	$\cp(\vect{\tau})$.
\end{theorem}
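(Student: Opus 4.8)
The plan is to exploit the fact that the potential function is affine in the distribution $\vect{\tau}$ and to reduce the optimization to a one-dimensional convexity statement about the component-code capability. First I would make the dependence on $\vect{\tau}$ explicit. Since $h(x)=\sum_{\tee=1}^{\tmax}\tau_\tee\Psi_{\geq\tee}(cx)$, both $H(x)=\int_0^x h(z)\,\D z=\sum_{\tee}\tau_\tee H_\tee(x;c)$, with $H_\tee(x;c)\define\int_0^x\Psi_{\geq\tee}(cz)\,\D z$, and therefore $V_s(x;c,\vect{\tau})=\tfrac12 x^2-\sum_\tee\tau_\tee H_\tee(x;c)$ are affine in $\vect{\tau}$. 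The threshold $\cp(\vect{\tau})$ is monotone in $V_s$: if $V_s(x;c,\vect{\tau}^\star)\geq V_s(x;c,\vect{\tau})$ for every $x\in[0,1]$ and every $c$, then the feasible set $\{c:\min_x V_s\geq 0\}$ for $\vect{\tau}^\star$ contains that for $\vect{\tau}$, so $\cp(\vect{\tau}^\star)\geq\cp(\vect{\tau})$. Hence it suffices to produce a single distribution $\vect{\tau}^\star$ of mean $\bar{\tee}$ that minimizes $H(x;c,\cdot)$ \emph{simultaneously} for all $(x,c)$, and then show $\vect{\tau}^\star$ is the (semi-)regular distribution of Def.~\ref{def:semi_regular}.

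The crux is a convexity lemma: for every fixed $x,c$ the map $\tee\mapsto H_\tee(x;c)$ is convex on the integers. This is where care is needed, because $\Psi_{\geq\tee}$ itself is \emph{not} convex in $\tee$ (the Poisson tail is sigmoidal in $\tee$, with an inflection near $\tee\approx cx$); the convexity is created by the integration. Concretely, using $\int_0^x\Psi_{=i}(cz)\,\D z=\tfrac1c\Psi_{\geq i+1}(cx)$ and telescoping, the second difference collapses to $H_{\tee+1}-2H_\tee+H_{\tee-1}=\tfrac1c\Psi_{=\tee}(cx)\geq 0$; equivalently $H_\tee(x;c)=\tfrac1c\,\E{(\Pois{cx}-\tee)^+}$, the expectation of a hinge that is convex in $\tee$. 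Crucially this holds for \emph{every} $(x,c)$, so the convex ordering in $\tee$ is uniform.

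Given the lemma, I would finish with a supporting-line argument that is independent of $(x,c)$. For non-integer $\bar{\tee}$ set $k=\floor{\bar{\tee}}$; convexity of the sequence $(\tee,H_\tee)$ makes the chord through $\tee=k$ and $\tee=k+1$ a global affine lower bound $a+b\tee\leq H_\tee$ for all integer $\tee$, with equality at $k$ and $k+1$. Hence any distribution of mean $\bar{\tee}$ satisfies $\sum_\tee\tau_\tee H_\tee\geq a+b\bar{\tee}$, and the semi-regular distribution, supported on $\{k,k+1\}$ with weights matching $\bar{\tee}$, attains this bound; for integer $\bar{\tee}$ the point mass (regular distribution) attains the Jensen value $H_{\bar{\tee}}$. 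Since the coefficients $a,b$ depend on $(x,c)$ but the support $\{k,k+1\}$ depends only on $\bar{\tee}$, the same $\vect{\tau}^\star$ minimizes $H$ for all $(x,c)$, giving the pointwise domination of $V_s$ and therefore $\cp(\vect{\tau}^\star)\geq\cp(\vect{\tau})$ for every competing $\vect{\tau}$. The hypothesis $\bar{\tee}\geq 2$ guarantees $k\geq 1$, so that both support points are genuine erasure-correcting capabilities and the potential-function machinery of Def.~\ref{def:potential_function} applies.

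The main obstacle is the convexity lemma: the naive route via convexity of $\Psi_{\geq\tee}$ in $\tee$ fails, and the key realization is that integrating the Poisson tail against $z$ turns the second difference into a nonnegative Poisson mass. Everything else — the affine reduction in $\vect{\tau}$, the reduction from $\cp$ to pointwise minimization of $H$, and the supporting-line selection of the nearest-integer straddle — then follows mechanically, with the uniformity of the minimizer across $(x,c)$ being the detail that must be stated carefully.
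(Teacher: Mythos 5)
Your proof is correct and follows essentially the same route as the paper's: the paper's key lemma is convexity in $\tee$ of $\Loss(\tee,x)=\sum_{k=0}^{\tee-1}\Psi_{=k}(x)(\tee-k)$, whose second difference is exactly $\Psi_{=\tee}(x)\geq 0$ --- the same computation as yours, since $\Loss(\tee,cx)$ and $cH_\tee(x;c)=\E{(\Pois{cx}-\tee)^+}$ differ only by the affine term $\tee-cx$, which is invisible to distributions of fixed mean $\bar{\tee}$. Your supporting-line/Jensen step and feasible-set monotonicity are the paper's affine-extension inequality $\Loss_{\vect{\tau}}(x)\geq\Loss(\bar{\tee},x)=\Loss_{\vect{\tau}_\text{reg}}(x)$ and its epigraph argument in different clothing.
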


\begin{proof}
	See the Appendix.
\end{proof}

Th.~\ref{th:regular_optimal} is in contrast to conventional PCs which
typically {\nobreak benefit} from employing component codes with different
strengths. However, Th.~\ref{th:regular_optimal} does not necessarily
imply that there can be no practical value in employing different
component codes also for spatially-coupled PCs. In practice,
quantities such as the coupling width, the component code length, and
the number of decoding iterations are constrained to be finite.
Depending on the severity of these constraints, the potential threshold
may not be a good performance indicator. 

\section{Conclusion}
\label{sec:conclusion}

We studied the asymptotic performance of deterministic
spatially-coupled PCs under iterative decoding. We showed that there
exists a family of deterministic braided codes that performs
asymptotically equivalent to a previously considered spatially-coupled
PC ensemble. There also exists a related but structurally simpler
braided code family that attains essentially the same asymptotic
performance. Lastly, we showed that employing component code mixtures
for spatially-coupled PCs is not beneficial from an asymptotic point
of view. 

\appendix

\section{Proof of Theorem \ref{th:braided_convergence} and
\ref{th:regular_optimal}}
\label{app:proofs}

\emph{Proof of Theorem \ref{th:braided_convergence}.} The recursion of
interest (after removing odd positions due to symmetry as explained in
Ex.~\ref{ex:braided_effective}) is given by $\vect{x}^{(\ell)} =
\vect{h}(\mat{B}' \vect{x}^{(\ell-1)})$, where $\mat{B}' = \gamma
\etab'$ and $\gamma$, $\etab'$ are as in
Def.~\ref{def:braided_extended}. The authors in \cite{Yedla2014} study
the recursion
\begin{align}
	\label{eq:yedla_de}
	\vect{y}^{(\ell)} = \mat{A}^\transpose \vect{f}( \mat{A}
	\vect{g} (\vect{y}^{(\ell-1)} )) = \mat{A}^\transpose
	\vect{f}(\tilde{\vect{y}}^{(\ell)})
\end{align}
for suitable functions $f$, $g$, where $\vect{\tilde{y}}^{(\ell)} =
\mat{A} \vect{g}(\vect{y}^{(\ell-1)})$ is defined implicitly. Since
$h$ is strictly increasing and analytic, we can let both $f = h$ and
$g = h$. For this proof, $\mat{A}$ is assumed to be of size $L \times
L+\tilde{w}-1$ with $A_{i,j} = \tilde{w}^{-1} \Indicator{1 \leq j-i+1
\leq \tilde{w}}$ for $i \in [L]$, $j \in [L+\tilde{w}-1]$, where
$\tilde{w} \define 2w - 1$. The potential function $U_s(x; c) = h(x)x
- H(x) - H(h(x))$ associated with the scalar recursion $x^{(\ell)} =
h(h(x^{(\ell-1)}))$ as defined in \cite[eq.~(4)]{Yedla2014} predicts
the same potential threshold as the one in
Def.~\ref{def:potential_function}. According to
\cite[Lem.~36]{Yedla2014}, the claim in the theorem is thus true for
the recursion \eqref{eq:yedla_de}. To show that it must also be true
for the recursion of interest, we argue as follows. Assume that we
swap the application of $\vect{h}$ and $\mat{B}'$ in the recursion of
interest and then consider ``two iterations at once'' according to
\begin{align}
	\label{eq:de_double}
	\vect{z}^{(\ell)} = \mat{B}' \vect{h}( \mat{B}'
	\vect{h} ( \vect{z}^{(\ell-1)} )) = \mat{B}' \vect{h}(
	\vect{\tilde{z}}^{(\ell)} ).
\end{align}
We claim that \eqref{eq:yedla_de} dominates \eqref{eq:de_double}, in
the sense that $\vect{y}^{(\infty)} = \vect{0}$ implies
$\vect{z}^{(\infty)} = \vect{0}$ (and thus $\vect{x}^{(\infty)} =
\vect{0}$). To see this, observe that $\vect{y}^{(\ell)}$ has length
$L+\tilde{w}-1$, whereas $\vect{\tilde{y}}^{(\ell)}$,
$\vect{z}^{(\ell)}$, and $\vect{\tilde{z}}^{(\ell)}$ have length $L$.
We use $\vect{y}^{(\ell)} = ( (\vect{y}^{(\ell)}_\text{t} )
^\transpose, (\vect{y}^{(\ell)}_\text{c})^\transpose,
(\vect{y}^{(\ell)}_\text{b}) ^\transpose)^\transpose$ to denote the
$w-1$ top, $L$ center, and $w-1$ bottom entries in
$\vect{y}^{(\ell)}$. We want to show that $\vect{y}^{(\ell)}_\text{c}
\succeq \vect{z}^{(\ell)}$ for all $\ell$.  Assume this is true for
$\ell - 1$. This gives the second inequality in 
\begin{align*}
\vect{\tilde{y}}^{(\ell)} = \mat{A} \vect{h}
(\vect{y}^{(\ell-1)}) \succeq \mat{B}' \vect{h}(
\vect{y}_{\text{c}}^{(\ell-1)} ) \succeq \mat{B}' \vect{h}( 
\vect{z}^{(\ell-1)} ) = \vect{\tilde{z}}^{(\ell)}, 
\end{align*}
where the first inequality follows from $\vect{y}^{(\ell-1)}_\text{t},
\vect{y}^{(\ell-1)}_\text{b} \succeq \vect{0}$ (since
$\vect{y}^{(\ell)} \succeq \vect{0}$ for all $\ell$) and the (almost
identical) structure of $\mat{A}$ and $\mat{B}'$.  Observe that we
have $\vect{y}^{(\ell)}_\text{c} = \mat{B}'
\vect{h}(\vect{\tilde{y}}^{(\ell)})$. Also $\vect{z}^{(\ell)} =
\mat{B}' \vect{h}(\vect{\tilde{z}}^{(\ell)})$ and, since we have just
shown that $\vect{\tilde{y}}^{(\ell)} \succeq
\vect{\tilde{z}}^{(\ell)}$, the claim follows by induction on $\ell$.
\qed

\medskip

\emph{Proof of Theorem \ref{th:regular_optimal}.} Using integration by
parts, one may verify that the potential function in
Def.~\ref{def:potential_function} is given by
\begin{align}
	\label{eq:potential_function_def}
	V_s(x; c, \vect{\tau}) = x^2/2 - x + (\bar{\tee} -
	\Loss_{\vect{\tau}}(c x))/c, 
\end{align}
where we defined $\Loss_{\vect{\tau}}(x) \define \sum_{\tee =
1}^{\tmax} \tau_\tee \Loss(\tee, x)$, with $\Loss(\tee, x) \define
\sum_{k=0}^{\tee-1} \Psi_{=k}(x) (\tee - k)$ for $\tee \in
\mathbb{N}$. For any fixed $x \geq 0$, we also define the affine
extension of $\Loss(\tee, x)$ for $\tee \in [1, \infty)$ as
\begin{align}
	\label{eq:loss_affine_extension}
	\Loss(\tee, x) = \Loss(\floor{\tee}, x) +
	(\Loss(\ceil{\tee}, x) - \Loss(\floor{\tee}, x))
	(\tee - \floor{\tee}).
\end{align}
The proof relies on the fact that $\Loss(\tee, x)$ is convex in $\tee
\in [1, \infty)$ for any $x\geq 0$. Indeed, since $\Loss(\tee, x)$ is the
affine extension of a discrete function, it suffices to show that for
$\tee \in \{2, 3, \dots\}$,
\begin{align}
	\Loss(\tee-1, x) + 
	\Loss(\tee+1, x) &= 
	2 \Loss(\tee, x) + \Psi_{=\tee}(x) \\
	&\geq 2 \Loss(\tee, x), 
\end{align}
since $\Psi_{=t}(x) \geq 0$ with equality if and only if $x = 0$. As a
consequence, for any distribution $\vect{\tau}$ with average
erasure-correcting capability $\bar{\tee}$ and any $x \geq 0$, we have 
\begin{align}
	\label{eq:loss_convexity}
	\Loss_{\vect{\tau}} (x) \geq \Loss(\bar{\tee}, x) =
\Loss_{\vect{\tau}_\text{reg}}(x),
\end{align}
where $\vect{\tau}_\text{reg}$ denotes the (semi-)regular distribution
in Def.~\ref{def:semi_regular}.

Now, let $\vect{\tee} \define (1, 2, \dots, \tmax)^\transpose$ and consider
\begin{equation}
	\label{eq:potential_threshold_program}
	\begin{aligned}
		\underset{\vect{\tau} \in \mathcal{T}}{\text{max}} \,\,
		\cp(\vect{\tau}) 
		 \,\,
	 \text{subject to} \,\, 
		\vect{\tee}^\transpose \vect{\tau} = \bar{\tee},
	\end{aligned}
\end{equation}
where $\mathcal{T} = \{ \vect{\tau} \in \mathbb{R}^{\tmax} \,|\,
\vect{1}^\transpose \vect{\tau} = 1,  \vect{\tau} \succeq 0 \}$. This
can be equivalently written in epigraph form as
\begin{equation}
	\label{eq:epigraph_form}
	\begin{aligned}
		\underset{c \geq 0, \vect{\tau} \in \mathcal{T}}{\text{max}} 
		\,\, c \,\, \text{subject to} \,\, c \leq \cp(\vect{\tau}),
		\,
		\vect{\tee}^\transpose  \vect{\tau} = \bar{\tee}.
	\end{aligned}
\end{equation}
According to \eqref{eq:potential_threshold}, the first constraint in
\eqref{eq:epigraph_form} is equivalent to $V_s(x; c, \vect{\tau}) \geq
0$ for $x \in [0; 1]$. Assume that \eqref{eq:epigraph_form} is
maximized by some $(c^*, \vect{\tau}^*)$. Then, for all $x \in [0;
1]$, we have
\begin{align}
	0 \leq V_s(x; c^*, \vect{\tau}^*) \leq V_s(x; c^*,
	\vect{\tau}_\text{reg}),
\end{align}
where the last inequality follows from
\eqref{eq:potential_function_def} and \eqref{eq:loss_convexity}.
Thus, the (semi-)regular distribution $\vect{\tau}_\text{reg}$ is
feasible and attains (at least) the same threshold value $c^*$. \qed



\begin{thebibliography}{10}
\providecommand{\url}[1]{#1}
\csname url@samestyle\endcsname
\providecommand{\newblock}{\relax}
\providecommand{\bibinfo}[2]{#2}
\providecommand{\BIBentrySTDinterwordspacing}{\spaceskip=0pt\relax}
\providecommand{\BIBentryALTinterwordstretchfactor}{4}
\providecommand{\BIBentryALTinterwordspacing}{\spaceskip=\fontdimen2\font plus
\BIBentryALTinterwordstretchfactor\fontdimen3\font minus
  \fontdimen4\font\relax}
\providecommand{\BIBforeignlanguage}[2]{{%
\expandafter\ifx\csname l@#1\endcsname\relax
\typeout{** WARNING: IEEEtran.bst: No hyphenation pattern has been}%
\typeout{** loaded for the language `#1'. Using the pattern for}%
\typeout{** the default language instead.}%
\else
\language=\csname l@#1\endcsname
\fi
#2}}
\providecommand{\BIBdecl}{\relax}
\BIBdecl

\bibitem{Elias1954}
P.~Elias, ``Error-free coding,'' \emph{IRE Trans. Inf. Theory}, vol.~4, no.~4,
  pp. 29--37, Apr. 1954.

\bibitem{Tanner1981}
R.~Tanner, ``A recursive approach to low complexity codes,'' \emph{IEEE Trans.
  Inf. Theory}, vol.~27, no.~5, pp. 533--547, Sep. 1981.

\bibitem{Justesen2011}
J.~Justesen, ``Performance of product codes and related structures with
  iterated decoding,'' \emph{IEEE Trans. Commun.}, vol.~59, no.~2, pp.
  407--415, Feb. 2011.

\bibitem{Smith2012a}
B.~P. Smith, A.~Farhood, A.~Hunt, F.~R. Kschischang, and J.~Lodge, ``Staircase
  codes: {FEC} for 100 {G}b/s {OTN},'' \emph{J. Lightw. Technol.}, vol.~30,
  no.~1, pp. 110--117, Jan. 2012.

\bibitem{Jian2014}
Y.-Y. Jian, H.~D. Pfister, K.~R. Narayanan, R.~Rao, and R.~Mazahreh,
  ``Iterative hard-decision decoding of braided {BCH} codes for high-speed
  optical communication,'' in \emph{Proc. IEEE Glob. Communication Conf.
  (GLOBECOM)}, Atlanta, GA, 2014.

\bibitem{Zhang2014}
L.~M. Zhang and F.~R. Kschischang, ``Staircase codes with 6\% to 33\%
  overhead,'' \emph{J. Lightw. Technol.}, vol.~32, no.~10, pp. 1999--2002, May
  2014.

\bibitem{Haeger2015ofc}
C.~H{\"{a}}ger, A.~{Graell i Amat}, H.~D. Pfister, A.~Alvarado,
  F.~Br{\"{a}}nnstr{\"{o}}m, and E.~Agrell, ``On parameter optimization for
  staircase codes,'' in \emph{Proc. Optical Fiber Communication Conf. (OFC)},
  Los Angeles, CA, 2015.

\bibitem{Feltstrom2009}
A.~J. Feltstr{\"{o}}m, D.~Truhachev, M.~Lentmaier, and K.~S. Zigangirov,
  ``Braided block codes,'' \emph{IEEE Trans. Inf. Theory}, vol.~55, no.~6, pp.
  2640--2658, Jul. 2009.

\bibitem{Kudekar2011}
S.~Kudekar, T.~Richardson, and R.~Urbanke, ``Threshold saturation via spatial
  coupling: Why convolutional {LDPC} ensembles perform so well over the
  {BEC},'' \emph{IEEE Trans. Inf. Theory}, vol.~57, no.~2, pp. 803--834, Feb.
  2011.

\bibitem{Yedla2014}
A.~Yedla, Y.-Y. Jian, P.~S. Nguyen, and H.~D. Pfister, ``A simple proof of
  {Maxwell} saturation for coupled scalar recursions,'' \emph{IEEE Trans. Inf.
  Theory}, vol.~60, no.~11, pp. 6943--6965, Nov. 2014.

\bibitem{Luby1998b}
M.~G. Luby, M.~Mitzenmacher, and M.~A. Shokrollahi, ``Analysis of random
  processes via and-or tree evaluation,'' in \emph{Proc. 9th Annual ACM-SIAM
  Symp. Discrete Algorithms}, San Franscisco, CA, 1998.

\bibitem{Richardson2001}
T.~J. Richardson and R.~L. Urbanke, ``The capacity of low-density parity-check
  codes under message-passing decoding,'' \emph{IEEE Trans. Inf. Theory},
  vol.~47, no.~2, pp. 599--618, Feb. 2001.

\bibitem{Jian2012}
Y.-Y. Jian, H.~D. Pfister, and K.~R. Narayanan, ``Approaching capacity at high
  rates with iterative hard-decision decoding,'' in \emph{Proc. IEEE Int. Symp.
  Information Theory (ISIT)}, Cambridge, MA, 2012.

\bibitem{Zhang2015}
L.~M. Zhang, D.~Truhachev, and F.~R. Kschischang, ``Spatially-coupled
  split-component codes with bounded-distance component decoding,'' in
  \emph{Proc. IEEE Int. Symp. Information Theory (ISIT)}, Hong Kong, 2015.

\bibitem{Bollobas2007}
B.~Bollob{\'{a}}s, S.~Janson, and O.~Riordan, ``The phase transition in
  inhomogeneous random graphs,'' \emph{Random Structures and Algorithms},
  vol.~31, no.~1, pp. 3--122, Aug. 2007.

\bibitem{Warnke2012}
\BIBentryALTinterwordspacing
L.~Warnke, ``On the method of typical bounded differences,''
  \emph{Combinatorics, Probability and Computing (to appear)}, 2015. [Online].
  Available: \url{http://arxiv.org/pdf/1212.5796.pdf}
\BIBentrySTDinterwordspacing

\bibitem{Haeger2015tit}
\BIBentryALTinterwordspacing
C.~H{\"{a}}ger, H.~D. Pfister, A.~{Graell i Amat}, and
  F.~Br{\"{a}}nnstr{\"{o}}m, ``Density evolution for deterministic generalized
  product codes on the binary erasure channel,'' \emph{submitted to IEEE Trans.
  Inf. Theory}, 2015. [Online]. Available:
  \url{http://arxiv.org/pdf/1512.00433.pdf}
\BIBentrySTDinterwordspacing

\bibitem{Pfister2015}
H.~D. Pfister, S.~K. Emmadi, and K.~Narayanan, ``Symmetric product codes,'' in
  \emph{Proc. Information Theory and Applications Workshop (ITA)}, San Diego,
  CA, 2015.

\end{thebibliography}

\end{document}